\newcommand{\listpaths}{\ensuremath{\mathtt{list\_paths}}}
\newcommand{\listpathsiter}{\ensuremath{\mathtt{list\_paths\_iterative}}}
\newcommand{\lcp}{\ensuremath{\mathtt{lcp}}}
\newcommand{\ie}{\textit{i.e. }}
\begin{document}

\title{Efficiently listing bounded length $st$-paths\thanks{GS and MFS
    were partially supported by the ERC programme FP7/2007-2013 / ERC
    grant agreement no. [247073]10, and the French project
    ANR-12-BS02-0008 (Colib'read).}}

\titlerunning{Listing bounded length $st$-paths}  
%

\author{Romeo Rizzi\inst{1} \and Gustavo Sacomoto\inst{2,3} \and Marie-France Sagot\inst{2,3}}
%
\tocauthor{}
\institute{
  Dipartimento di Informatica, Universit\`a di Verona, Italy
  \and
  Universit\'e de Lyon, F-69000 Lyon; Universit\'e Lyon 1; CNRS, UMR5558,
  Laboratoire de Biom\'etrie et Biologie \'Evolutive, F-69622 Villeurbanne, France 
  \and  
  INRIA Grenoble Rh\^one-Alpes, France 
}

\maketitle    

\begin{abstract}
  The problem of listing the $K$ shortest simple (loopless) $st$-paths
  in a graph has been studied since the early 1960s. For a
  non-negatively weighted graph with $n$ vertices and $m$ edges, the
  most efficient solution is an $O(K(mn + n^2 \log n))$ algorithm for
  directed graphs by Yen and Lawler [Management Science, 1971 and
    1972], and an $O(K(m+n \log n))$ algorithm for the undirected
  version by Katoh \textit{et al.} [Networks, 1982], both using $O(Kn
  + m)$ space. In this work, we consider a different parameterization
  for this problem: instead of bounding the number of $st$-paths
  output, we bound their length. For the bounded length
  parameterization, we propose new non-trivial algorithms matching the
  time complexity of the classic algorithms but using only $O(m+n)$
  space. Moreover, we provide a unified framework such that the
  solutions to both parameterizations -- the classic $K$-shortest and
  the new length-bounded paths -- can be seen as two different
  traversals of a same tree, a Dijkstra-like and a DFS-like traversal,
  respectively.
\end{abstract}


\section{Introduction}
The $K$-shortest simple paths problem has been studied for more than
50 years (see the references in \cite{Dreyfus69}). The first efficient
algorithm for this problem in directed graphs with non-negative
weights only appeared 10 years later independently by Yen~\cite{Yen71}
and Lawler~\cite{Lawler72}. Given a non-negatively weighted directed
graph $G = (V,E)$ with $n = |V|$ vertices and $m = |E|$ edges, using
modern data structures \cite{Tarjan90}, their algorithm lists the $K$
distinct shortest \emph{simple} $st$-paths by non-decreasing order of
the their lengths in $O(K (mn + n^2 \log n))$ time. For undirected
graphs, Katoh \textit{et al.} \cite{Katoh82} gave an improved $O(K(m+n
\log n))$ algorithm. Both algorithms use $O(Kn + m)$ memory.

The best known algorithm for directed \emph{unweighted} graphs is an
$\widetilde{O}(Km\sqrt{n})$ randomized algorithm \cite{Roditty05},
where $\widetilde{O}(f(n))$ is a shorthand for $O(f(n) \log^k n)$.  In
a different direction, Roditty~\cite{Roditty07} noticed that the
$K$-shortest simple paths can be efficiently approximated. Building
upon his work, Bernstein~\cite{Bernstein10} presented an
$\widetilde{O}(Km/\epsilon)$ time algorithm for a $(1 +
\epsilon)$-approximation. Moreover, Eppstein~\cite{Eppstein99} showed
that if the paths are allowed to repeat vertices, \ie they are not
\emph{simple}, then the problem can be solved in $O(K + m + n \log n)$
time. However, when the paths are \emph{simple} and to be computed
exactly, no improvement has been made on Yen and Lawler's for directed
graphs or Katoh's algorithm for undirected graphs. The main bottleneck
of these algorithms is their memory consumption.

Here, we consider the problem of listing all $st$-paths with length at
most $\alpha$. This is a different parameterization of the
$K$-shortest path problem, where we impose an upper-bound on the
length of the output paths instead of their number. This is a natural
variant of the $K$-shortest path problem. There are situations where
it is necessary to consider all paths that are a given percentage of
the optimal (\textit{e.g.}  \cite{Bohmova13}). Moreover, the bounded
length problem is \emph{almost} a particular case of the $K$-shortest
path problem. Given any solution to the $K$-shortest path problem,
such that the $st$-paths are generated one at a time in non-decreasing
length order, we can use the following simple approach to solve the
$\alpha$-bounded length variant: choose a sufficiently large $K$ and
halt the enumeration when the length of the paths is larger than
$\alpha$.  The main disadvantage of this algorithm is its space
complexity which is proportional to the number of paths output hence,
in the worst case, exponential in the size of the graph.

Our first and main contribution are new polynomial delay algorithms to
list $st$-paths with length at most $\alpha$ matching the time
complexity (per path) of Yen and Lawler's algorithm for directed
graphs (Section~\ref{sec:simple_alg}) and Katoh's for undirected
graphs (Section~\ref{sec:weighted:improved}), but using only $O(n+m)$
internal memory. This represents an exponential improvement in memory
consumption.

The main differences between the classic solutions to the $K$-shortest
paths problem and our solutions to the $\alpha$-bounded paths problem
are the order in which the solutions are output and the memory
complexity of the algorithms.

Our second contribution is thus a unified framework where both
problems can be represented in such a way that those differences arise
in a natural manner (Section~\ref{sec:simple_alg}). Intuitively, we
show that both families of algorithmic solutions correspond to two
different traversals of a \emph{same} rooted tree: a Dijkstra-like
traversal for the $K$-shortest and a DFS-like traversal for the
$\alpha$-bounded paths.

\section{Preliminaries}
Given a directed graph $G = (V,E)$ with $n = |V|$ vertices and $m =
|E|$ arcs, the in and out-neighborhoods of $v \in V$ are denoted by
$N^-(v)$ and $N^+(v)$, respectively. Given a (directed or undirected)
graph $G$ with weights $w : E \mapsto \mathbb{Q}$, the weight, or
\emph{length}, of a path $\pi$ is $\sum_{(u,v) \in \pi} w(u,v)$ and is
denoted by $w(\pi)$. We say that a path $\pi$ is
\emph{$\alpha$-bounded} if its \emph{length} satisfies $w(p) \leq
\alpha$ and $\alpha \in \mathbb{Q}$; in the particular case of unit
weights (\ie of unweighted graphs), we say that $p$ is
\emph{$k$-bounded} if $w(p) \leq k$ with $k \in \mathbb{Z}_{\geq
  0}$. A listing algorithm is \emph{polynomial delay} if it generates
the solutions, one after the other in some order, and the time elapsed
until the first is output, and thereafter the time elapsed (delay)
between any two consecutive solutions, is bounded by a polynomial in
the input size \cite{Johnson88}. The general problem which we are
concerned in this work is listing $\alpha$-bounded $st$-paths in $G$.

\begin{problem}[Listing $\alpha$-bounded $st$-paths] \label{prob:list}
  Given a weighted directed graph $G = (V,E)$, two vertices $s,t \in
  V$, and an upper bound $\alpha \in \mathbb{Q}$, output all
  $\alpha$-bounded $st$-paths.
\end{problem}

Clearly, any solution to the $K$-shortest path problem is also a
solution to Problem~\ref{prob:list}, with the same (total/delay) time
and space complexities. Thus Problem~\ref{prob:list} is no harder than
the classic $K$-shortest path problem. 

We assume all directed graphs are weakly connected and all undirected
graphs are connected, hence $m \geq n-1$. Moreover, we assume
hereafter the weights are non-negative. We remark however that a
weaker assumption suffices to the applicability of our
algorithms. Indeed, it is a well known fact that, when the graph $G$
and the weights $w : E \mapsto \mathbb{Q}$ are such that no cycle is
negative, then, using Johnson's reweighting strategy~\cite{Johnson77},
we can compute non-negative weights $w'$ such that, for some constant
$C$, we have that $w'(\pi) = w(\pi) + C$ for any $st$-path $\pi$.
This reweighting can be done in $O(mn)$ preprocessing steps.

\section{An $O(mn + n^2 \log n)$-delay algorithm} \label{sec:simple_alg}

In this section, we present an $O(mn + n^2 \log n)$-delay algorithm to
list all $st$-paths with length at most $\alpha$ in a weighted
directed graph $G$. Thus matching the time complexity (per path) of
Yen and Lawler's algorithm, while using only space linear in the
\emph{input} size.

The new algorithm, inspired by the binary partition
method~\cite{Birmele13,Rizzi14}, recursively partitions the solution
space at every call until the considered subspace is a singleton
(contains only one solution) and in that case outputs the
corresponding solution. In order to have an efficient algorithm is
important to explore only non-empty partitions. Moreover, it should be
stressed that the order in which the solutions are output is fixed,
but arbitrary.

Let us describe the partition scheme. Let
$\mathcal{P}_{\alpha}(s,t,G)$ be the set of all $\alpha$-bounded paths
from $s$ to $t$ in $G$, and $(x,s) \cdot \mathcal{P}_{\alpha}(s,t,G)$
denote the concatenation of $(x,s)$ to each path of
$\mathcal{P}_{\alpha}(s,t,G)$. Assuming $s \neq t$, we have that
\begin{equation} \label{eq:path:partition}
\mathcal{P}_{\alpha}(s,t,G) = \bigcup_{v \in N^+(s)} (s,v) \cdot
\mathcal{P}_{\alpha'} (v,t,G-s),
\end{equation}
where $\alpha' = \alpha - w(s,v)$. In words, the set of paths from $s$
to $t$ can be \emph{partitioned} into the disjoint union of $(s,v)
\cdot \mathcal{P}_{\alpha'} (v, t,G - s)$, the sets of paths beginning
with an arc $(s,v)$, for each $v \in N^+(s)$. Indeed, since $s \neq
t$, every path in $\mathcal{P}_{\alpha} (s, t, G)$ necessarily begins
with an arc $(s,v)$, where $v \in N^+(s)$.

Algorithm~\ref{alg:simple} implements this recursive partition
strategy. The solutions are only output in the leaves of the recursion
tree (line~\ref{alg:simple:output}), where the partition is always a
singleton. Moreover, in order to guarantee that every leaf in the
recursion tree outputs one solution, we have to test if
$\mathcal{P}_{\alpha'} (v, t, G - u)$, where $\alpha' = \alpha -
w(u,v)$, is not empty before the recursive call
(line~\ref{alg:simple:test}). This set is not empty if and only if the
weight of the shortest path from $v$ to $t$ in $G-u$ is at most
$\alpha'$, \ie $d_{G-u}(v,t) \leq \alpha' = \alpha - w(u,v)$. Hence,
to perform this test it is enough to compute all the distances from
$t$ in the graph $G^R - u$, where $G^R$ is the graph $G$ with all arcs
reversed.

Consider a generic execution of Algorithm~\ref{alg:simple} for a graph
$G$, vertices $s,t \in V$ and an upper bound $\alpha$. We can
represent this execution by a rooted tree $\mathcal{T}$, \ie the
recursion tree, where each node corresponds to a call with arguments
$\langle u,t,\alpha, \pi_{su}, G' \rangle$. The children of a given
node (call) in $\mathcal{T}$ are the recursive calls with arguments
$\langle v,t,\alpha', \pi_{su} (u,v), G' - u \rangle$ of
line~\ref{alg:simple:rec}. This tree plays an important role in the
unified framework of Section~\ref{sec:weighted:ordered}.

\begin{lemma} \label{lem:rec:tree}
  The recursion tree $\mathcal{T}$ has the following properties:
  \begin{enumerate}
    \item The leaves of $\mathcal{T}$ are in one-to-one correspondence
      with the paths in $\mathcal{P}_{\alpha}(s,t,G)$.
    \item The leaves in the subtree rooted on a node $\langle
      u,t,\alpha, \pi_{su}, G' \rangle$ correspond to the paths in
      $\pi_{su} \cdot \mathcal{P}_{\alpha'}(u,t,G')$.
    \item The height of $\mathcal{T}$ is bounded by $n$.
  \end{enumerate}
\end{lemma}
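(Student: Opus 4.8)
The plan is to prove the three properties by induction on the structure of the recursion tree, using the partition equation \eqref{eq:path:partition} as the central tool. All three statements are really facets of the same inductive fact about the correspondence between recursion nodes and path sets, so I would organize the proof around property (2), from which property (1) follows as the special case of the root, and handle property (3) separately by a simpler argument about how the graph shrinks along each root-to-leaf branch.

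For property (2), I would argue by induction on the height of the subtree rooted at a node $\langle u,t,\alpha',\pi_{su},G'\rangle$ (note the bound should be $\alpha'$, matching the argument at that node). The base case is a leaf: here the partition has collapsed to a singleton, which happens precisely when $u=t$, and the single leaf corresponds to the path $\pi_{su}$ itself, consistent with $\pi_{su}\cdot\mathcal{P}_{\alpha'}(t,t,G')$ being the singleton containing the empty path. For the inductive step, I would invoke \eqref{eq:path:partition} to write $\mathcal{P}_{\alpha'}(u,t,G')$ as the disjoint union over $v\in N^+(u)$ of $(u,v)\cdot\mathcal{P}_{\alpha''}(v,t,G'-u)$ with $\alpha''=\alpha'-w(u,v)$; each nonempty term corresponds, by line~\ref{alg:simple:test}, to exactly one child $\langle v,t,\alpha'',\pi_{su}(u,v),G'-u\rangle$, and by the induction hypothesis the leaves below that child are in bijection with $\pi_{su}(u,v)\cdot\mathcal{P}_{\alpha''}(v,t,G'-u)$. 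Concatenating $\pi_{su}$ on the left and taking the union over children yields exactly $\pi_{su}\cdot\mathcal{P}_{\alpha'}(u,t,G')$. Property (1) is then the instance $u=s$, $\alpha'=\alpha$, $\pi_{su}$ empty, $G'=G$ at the root, giving a bijection between all leaves and $\mathcal{P}_{\alpha}(s,t,G)$.

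For property (3), I would observe that each recursive call removes the current head vertex $u$ from the graph, so along any root-to-leaf path the vertex set strictly decreases at every step. Since the graph starts with $n$ vertices and a branch terminates once the head reaches $t$, the depth of any branch is at most $n$, bounding the height of $\mathcal{T}$ by $n$.

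The main obstacle I expect is making the disjointness and nonemptiness bookkeeping airtight in the inductive step. The emptiness test on line~\ref{alg:simple:test} is what guarantees the tree has no ``dead'' branches that fail to produce a leaf, so I must verify that the test $d_{G'-u}(v,t)\le\alpha''$ is equivalent to $\mathcal{P}_{\alpha''}(v,t,G'-u)\neq\emptyset$ — this is exactly the claim that a bounded path exists iff the shortest path fits within the budget, and it is the hinge ensuring the one-to-one (rather than merely onto) correspondence in property (1). The disjointness of the union across children reduces to the observation already noted in the text, that distinct first arcs $(u,v)$ give disjoint path sets, so no path is double-counted.
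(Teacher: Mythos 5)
Your proof is correct and follows exactly the route the paper intends: the paper states Lemma~\ref{lem:rec:tree} without an explicit proof, remarking only that everything ``follows directly from the relation given in Eq.~\ref{eq:path:partition} and the correctness of the tests of line~\ref{alg:simple:test},'' and your induction on the subtree via the partition equation, together with the observation that the test $d_{G'-u}(v,t)\leq\alpha''$ certifies non-emptiness of each part (so every branch reaches a leaf) and the vertex-removal argument for the height bound, is precisely the fleshed-out version of that argument. No gaps.
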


\begin{algorithm} 
\caption{$\listpaths(u,t, \alpha, \pi_{su}, G)$} \label{alg:simple}
\If{$u = t$}{ 
  output($\pi_{su}$) \\ \label{alg:simple:output}
  \bf return 
}
compute the distances from $t$ in $G^R - u$ \label{alg:simple:dist}\\
\For{$v \in N^+(u)$}{
  \If{$d(v,t) \leq \alpha - w(u,v)$}{ \label{alg:simple:test} 
    $\listpaths(v,t, \alpha - w(u,v), \pi_{su} \cdot (u,v), G - u)$ \label{alg:simple:rec} 
  } 
}
\end{algorithm}

The correctness of Algorithm~\ref{alg:simple} follows directly from
the relation given in Eq.~\ref{eq:path:partition} and the correctness
of the tests of line~\ref{alg:simple:test}. 

Let us now analyze its running time. The cost of a node in
$\mathcal{T}$ is the time spent by the operations inside the
corresponding call, without including its recursive calls. This cost
is dominated by the tests of line~\ref{alg:simple:test}. They are
performed in $O(1)$ time by pre-computing the distances from $t$ to
all vertices in the reverse graph $G^R - u$
(line~\ref{alg:simple:dist}). This takes $O(t(n,m))$ time, where
$t(n,m)$ is the cost of a single source shortest path computation. By
Lemma~\ref{lem:rec:tree} the height of $\mathcal{T}$ is bounded by
$n$, so the path between any two leaves (solutions) in the recursion
tree has at most $2n$ nodes. Thus, the time elapsed between two
solutions being output is $O(nt(n,m))$. Moreover, the algorithm uses
$O(m)$ space, since each recursive call has to store only the
difference with the its parent graph.  Recall that each solution is
immediately output (line~\ref{alg:simple:output}), not stored by the
algorithm.

\begin{theorem} \label{thm:simple_complexity}
  Algorithm~\ref{alg:simple} has delay $O(n t(n,m))$, where $t(n,m)$
  is the cost of a single source shortest path computation, and uses
  $O(m)$ space. 
\end{theorem}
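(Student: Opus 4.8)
The plan is to analyze the delay and the space separately, relying on the structure of the recursion tree $\mathcal{T}$ established in Lemma~\ref{lem:rec:tree}. First I would pin down the cost of a single node of $\mathcal{T}$, i.e.\ the work done inside one call of $\listpaths$ excluding its recursive descendants. This work consists of the single-source shortest path computation on line~\ref{alg:simple:dist}, which costs $O(t(n,m))$ by definition, followed by the for loop, whose body performs only the $O(1)$ test of line~\ref{alg:simple:test} for each of the at most $\deg^+(u) \le n-1$ out-neighbours of $u$. Since a single-source shortest path computation must at least read the graph, $t(n,m) = \Omega(m) = \Omega(n)$, so the $O(n)$ cost of the loop is absorbed and the per-node cost is $O(t(n,m))$.

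Next I would bound the delay. The key structural observation, which I expect to be the main obstacle, is that the non-emptiness test on line~\ref{alg:simple:test} guarantees that \emph{every} recursive call is issued only when $\mathcal{P}_{\alpha'}(v,t,G-u) \neq \emptyset$; by parts~1 and~2 of Lemma~\ref{lem:rec:tree} this means that every node of $\mathcal{T}$ has at least one leaf in its subtree. Hence a depth-first traversal of $\mathcal{T}$ never enters a barren subtree and outputs a fresh solution each time it reaches a leaf. Between two consecutive leaf outputs the traversal ascends from the current leaf to the lowest common ancestor of the two leaves and then descends to the next leaf; since the height of $\mathcal{T}$ is at most $n$ by part~3 of Lemma~\ref{lem:rec:tree}, this path visits at most $2n$ nodes. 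Each visited node is either entered for the first time or resumed after one of its recursive calls returns, and in either case it contributes $O(t(n,m))$ work, so the delay between two solutions is $O(n\, t(n,m))$. The same bound covers the time to the first output, which is the cost of the initial root-to-leaf descent of length at most $n$.

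Finally I would verify the $O(m)$ space bound. Rather than copying the graph at each call, the algorithm maintains a single copy of the current graph and, when recursing from $u$ into a child, removes $u$ together with its incident arcs in place, recording exactly this difference on the recursion stack so that it can be restored upon return. Along any root-to-leaf path the removed vertices are pairwise distinct, so the differences stored on the stack sum to $\sum_{u} \deg(u) = O(m)$, on top of the single graph of size $O(m+n) = O(m)$. The partial path $\pi_{su}$ is likewise maintained in place and has length at most $n$. The one remaining subtlety is the distance array of line~\ref{alg:simple:dist}: keeping one array per stack level would cost $O(n^2)$, so instead I would keep a single array and recompute it each time a call is resumed. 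This adds only $O(n)$ space, and it does not affect the delay since, as noted above, each node on the path between two consecutive leaves triggers at most one such recomputation. Summing all contributions gives $O(m)$ total space, completing the proof.
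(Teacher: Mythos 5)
Your proof is correct and follows essentially the same route as the paper: per-node cost dominated by the $O(t(n,m))$ shortest-path computation, delay bounded by the at most $2n$ nodes on the path between consecutive leaves of the recursion tree (using the non-emptiness tests to ensure every node has a leaf descendant), and $O(m)$ space by storing only per-call differences. You are in fact more careful than the paper on one point --- noting that keeping a distance array per recursion level would cost $O(n^2)$ and proposing to recompute it on resumption instead --- which is a legitimate implementation detail the paper glosses over.
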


For unweighted (directed and undirected) graphs, the single source
shortest paths can be computed using breadth-first search (BFS)
running in $O(m)$ time, so Theorem~\ref{thm:simple_complexity}
guarantees an $O(km)$ delay to list all $k$-bounded $st$-paths, since
the height of the recursion tree is bounded by $k$ instead of
$n$. More generally, the single source shortest paths can be computed
using Dijkstra's algorithm in $O(m + n \log n)$ time (we are assuming
non-negative weights), resulting in an $O(nm + n^2 \log n)$ delay.

\section{An improved algorithm for undirected graphs} \label{sec:weighted:improved}

The total time complexity of Algorithm~\ref{alg:simple} is equal to
the delay times the number of solutions, \ie $O(nt(n,m) \gamma)$,
where $\gamma = |\mathcal{P}_{\alpha}(s,t,G)|$ is the number of
$\alpha$-bounded $st$-paths. We now improve its total time complexity
from $O(nt(n,m) \gamma)$ to $O((m+t(n,m)) \gamma)$ in the case of
weighted \emph{undirected} graphs. On average the algorithm spends
$O(m+t(n,m))$ per solution (amortized delay), thus matching the time
complexity (per path) of Katoh's algorithm. The (worst-case) delay,
however, remains the same as Algorithm~\ref{alg:simple}.

The main idea to improve the complexity of Algorithm~\ref{alg:simple}
is to explore the structure of the set of paths
$\mathcal{P}_{\alpha}(s,t,G)$ to reduce the number of nodes in the
recursion tree. We avoid redundant partition steps by guaranteeing
that every node in the recursion tree has at least two children. More
precisely, at every call, we identify the longest common prefix of
$\mathcal{P}_{\alpha}(s,t,G)$, \ie the longest (considering the number
of edges) path $\pi_{ss'}$ such that $\mathcal{P}_{\alpha}(s,t,G) =
\pi_{ss'} \cdot \mathcal{P}_{\alpha}(s',t,G)$, and append it to the
current path prefix being considered in the recursive call. The
intuition here is that by doing so we identify and ``merge'' all the
consecutive single-child nodes in the recursion tree, thus
guaranteeing that the remaining nodes have at least two children.

The pseudocode for this algorithm is very similar to
Algorithm~\ref{alg:simple} and, for the sake of completeness, is given
in Algorithm~\ref{alg:improved}. We postpone the description of the
$\lcp(u, t, \alpha, G)$ function to the next section, along with a
discussion about the difficulties to extend it to directed graphs.

\begin{algorithm} 
\caption{$\listpaths(u,t, \alpha, \pi_{su}, G)$} \label{alg:improved}
$\pi_{uu'}$ = $\lcp(u, t, \alpha, G)$ \\
\uIf{$u' = t$}{ 
  output($\pi_{su}\pi_{uu'}$) \\ 
  \bf return 
}
\Else{
  compute a shortest path tree $T'_t$ from $t$ in $G^R - \pi_{uu'}$ \\
  \For{$v \in N(u')$}{
    \If{$d(v,t) + w(u,v) \leq \alpha$}{ 
      $\listpaths(v,t, \alpha - w(\pi_{uu'}) - w(u',v), \pi_{su} \cdot \pi_{uu'} \cdot (u',v), G - \pi_{uu'})$ 
    } 
  }
}
\end{algorithm}

The correctness of Algorithm~\ref{alg:improved} follows directly from
the correctness of Algorithm~\ref{alg:simple}. The space used is the
same of Algorithm~\ref{alg:simple}, provided that $\lcp(u, t, \alpha,
G)$ uses linear space, which, as we show in the next section, is
indeed the case (Theorem~\ref{teo:lcp}).

Let us now analyze the total complexity of
Algorithm~\ref{alg:improved} as a function of the input size and of
$\gamma$, the number of $\alpha$-bounded $st$-paths.  Let $R$ be the
recursion tree of Algorithm~\ref{alg:improved} and $T(r)$ the cost of
a given node $r \in R$. The total cost of the algorithm can be split
in two parts, which we later bound individually, in the following way:

\begin{equation}
   \sum_{r \in R} T(r) = \sum_{r: internal} T(r) + \sum_{r: leaf} T(r). \label{eq:total_cost}
\end{equation}

We have that $\sum_{r: leaf} T(r) = O((m + t(m,n))\gamma)$, since
leaves and solutions are in one-to-one correspondence and the cost for
each leaf is dominated by the cost of $\lcp(u, t, \alpha, G)$, that is
$O(m + t(m,n))$ (Theorem~\ref{teo:lcp}). Now, we have that every
internal node of the recursion has at least two children, otherwise
$\pi_{uu'}$ would not be the longest common prefix of
$\mathcal{P}_{\alpha}(u,t,G)$. Thus, $\sum_{r: internal} T(r) = O((m +
t(m,n))\gamma)$ since in any tree the number of branching nodes is at
most the number of leaves, and the cost of each internal node is
dominated by the $O(m + t(m,n))$ cost of the longest prefix
computation. Therefore, the total complexity of
Algorithm~\ref{alg:improved} is $O((m + t(n,m))\gamma)$. This
completes the proof of Theorem~\ref{thm:cost_improved}.

\begin{theorem} \label{thm:cost_improved}
  Algorithm~\ref{alg:improved} outputs all $\alpha$-bounded $st$-paths
  in $O((m+t(n,m))\gamma)$ time using $O(m)$ space.
\end{theorem}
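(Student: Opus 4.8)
The plan is to bound the total running time by summing the per-node cost $T(r)$ over all nodes of the recursion tree $R$ of Algorithm~\ref{alg:improved}, splitting that sum into a leaf contribution and an internal-node contribution exactly as in Eq.~\ref{eq:total_cost}. First I would establish that the leaves of $R$ are in one-to-one correspondence with the $\gamma$ output paths, adapting the argument of Lemma~\ref{lem:rec:tree} (stated there for Algorithm~\ref{alg:simple}) to the $\lcp$-augmented partition: each leaf is reached precisely when $u' = t$ and emits exactly one $st$-path. The only per-node operations that are not $O(1)$ are the call to $\lcp(u,t,\alpha,G)$ and the single shortest-path-tree computation in $G^R - \pi_{uu'}$, both of which cost $O(m + t(n,m))$ by Theorem~\ref{teo:lcp}; hence every leaf contributes $O(m+t(n,m))$ and $\sum_{r: leaf} T(r) = O((m+t(n,m))\gamma)$.

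The crux is bounding the internal-node contribution, and this is where I expect the main obstacle. The key structural claim is that \emph{every internal node of $R$ has at least two children}. I would prove this by contradiction: if an internal node $\langle u,t,\alpha,\pi_{su},G\rangle$ had a single surviving child, then the arc $(u',v)$ following the returned prefix $\pi_{uu'}$ would be common to every path in $\mathcal{P}_{\alpha}(u,t,G)$, so $\pi_{uu'}\cdot(u',v)$ would itself be a common prefix strictly longer than $\pi_{uu'}$, contradicting maximality. Establishing this maximality — that $\lcp$ really returns the \emph{longest} path $\pi_{uu'}$ with $\mathcal{P}_{\alpha}(u,t,G)=\pi_{uu'}\cdot\mathcal{P}_{\alpha}(u',t,G)$ — is the delicate point on which the whole improvement hinges, and it is precisely what Theorem~\ref{teo:lcp} must guarantee.

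Granting the two-children property, a standard counting fact closes the time bound: in any rooted tree the number of branching (internal) nodes is at most the number of leaves, so $R$ has at most $\gamma$ internal nodes, each again costing $O(m+t(n,m))$ by Theorem~\ref{teo:lcp}; thus $\sum_{r: internal} T(r)=O((m+t(n,m))\gamma)$. Adding the two contributions of Eq.~\ref{eq:total_cost} yields the claimed $O((m+t(n,m))\gamma)$ total time.

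Finally, for the space bound I would argue exactly as for Algorithm~\ref{alg:simple}: along any root-to-leaf branch each recursive call stores only the difference $G - \pi_{uu'}$ with respect to its parent's graph, and solutions are output immediately (rather than accumulated), giving $O(m)$ working space. The one extra requirement is that $\lcp$ itself run in linear space, which is again supplied by Theorem~\ref{teo:lcp}. Together these two parts complete the proof of Theorem~\ref{thm:cost_improved}.
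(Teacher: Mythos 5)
Your proposal is correct and follows essentially the same route as the paper: the same split of the total cost into leaf and internal-node contributions via Eq.~\ref{eq:total_cost}, the same key observation that maximality of the prefix returned by $\lcp$ forces every internal node of the recursion tree to have at least two children (hence at most $\gamma$ internal nodes, each costing $O(m+t(n,m))$ by Theorem~\ref{teo:lcp}), and the same space argument inherited from Algorithm~\ref{alg:simple}. The only difference is that you spell out the contradiction argument for the two-children property, which the paper states in one clause.
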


This means that for unweighted graphs, it is possible to list all
$k$-bounded $st$-paths in $O(m)$ time per path.  In addition, for
weighted graphs, it is possible to list all $\alpha$-bounded
$st$-paths in $O(m + n \log n)$ time per path. 

\subsection{Computing the longest common prefix of $\mathcal{P}_{\alpha}(s,t,G)$}

The problem of computing the longest common prefix of
$\mathcal{P}_{\alpha}(s,t,G)$ can be seen as a special case of the
\emph{replacement paths problem}~\cite{Hershberger01}. Let $\pi$ be a
shortest $st$-path in $G$. In this problem we want to compute, for
each edge $e$ on $\pi$, the shortest $st$-path that avoids $e$. Given
a solution to the replacement path problem we can compute the longest
common prefix of $\mathcal{P}_{\alpha}(s,t,G)$ using the following
procedure. For each edge $e$ along the path $\pi$, check whether the
shortest $st$-path avoiding $e$ is shorter than $\alpha$. There is an
$O(m + n \log n)$ algorithm to compute the replacement path in
undirected graphs~\cite{Malik89}, but for directed graphs the best
solutions is a trivial $O(nm + n^2 \log n)$ algorithm.

In this section, we present an alternative, arguably simpler,
algorithm to compute the longest common prefix of the set of
$\alpha$-paths from $s$ to $t$, completing the description of
Algorithm~\ref{alg:improved}. The naive algorithm for this problem
runs in $O(nt(n,m))$ time, so that using it in Algorithm~\ref{alg:improved}
would not improve the total complexity compared to
Algorithm~\ref{alg:simple}. Basically, the naive algorithm computes a
shortest path $\pi_{st}$ and then for each prefix in increasing order
of length tests if there are at least two distinct extensions each
with total weight less than $\alpha$.  In order to test the
extensions, for each prefix $\pi_{su}$, we recompute the distances
from $t$ in the graph $G - \pi_{su}$, thus performing $n$ shortest
path tree computations ($k$ computations in the unweighted case) in
the worst case.

Algorithm~\ref{alg:lcp} improves the naive algorithm by avoiding those
recomputations.  However, before entering the description of
Algorithm~\ref{alg:lcp}, we need a better characterization of the
structure of the longest common prefix of
$\mathcal{P}_{\alpha}(s,t,G)$. Lemma~\ref{lem:lcp} gives this. It does
so by considering a shortest path tree rooted at $s$, denoted by
$T_s$.  Recall that $T_s$ is a subgraph of $G$ and induces a partition
of the edges of $G$ into tree edges and non-tree edges. In this tree,
the longest common prefix of $\mathcal{P}_{\alpha}(s,t,G)$ is a prefix
of the tree path from the root $s$ to $t$. Additionally, any $st$-path
in $G$, excluding the tree path, necessarily passes through at least
one non-tree edge.  The lemma characterizes the longest common prefix
in terms of the non-tree edges from the subtrees rooted at siblings of
the vertices in the tree path from $s$ to $t$. For instance, in
Fig.~\ref{fig:weighted:lemma}(b) the common prefix $\pi_{su}$ can be
extended to $\pi_{su} \cdot (u,v)$ only if there is no
$\alpha$-bounded path that passes through the subtree $T_w$ and a
non-tree edge $(x,z)$, where $v$ belongs to tree path from $s$ to $t$
and $w$ is one of its siblings.

\begin{figure}[htbp]
  \centering \def\svgwidth{0.8\linewidth}
  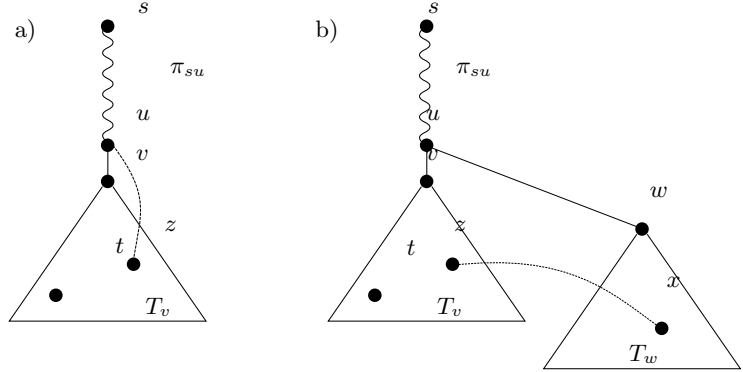 
  \caption{The common prefix $\pi_{su}$ of
    $\mathcal{P}_{\alpha}(s,t,G)$ can always be extended into an
    $st$-path using the tree path of $T_s$ from $u$ to $t$. The path
    $\pi_{su}$ is the longest common prefix if and only if it can also
    be extended with a path containing a non-tree edge $(x,z)$ such
    that $z \in T_v$ and (a) $x = u$ or (b) $x \in T_w$ and $w$ is
    sibling of $v$; and $d_{G'}(s,x) + w(x,z) + d_{G'}(z,t) \leq
    \alpha$, where $G' = G - (u,v)$.}  \label{fig:weighted:lemma}
  \vspace{-0.35cm}  
\end{figure}

\begin{lemma} \label{lem:lcp}
 Let $\pi_{su} = (s = v_0, v_1), \ldots, (v_{l-1}, v_l = u)$ be a
 common prefix of all paths in $\mathcal{P}_{\alpha}(s,t,G) \neq
 \emptyset$ and $T_s$ a shortest path tree rooted at $s$. Then,
 \begin{enumerate}
 \item the path $\pi_{su}(u,v)$ is a common prefix of
   $\mathcal{P}_{\alpha}(s,t,G)$, if there is no edge $(x,z)$ such
   that $d_{G'}(s,x) + w(x,z) + d_{G'}(z,t) \leq \alpha$, where $G' =
   G - (u,v)$, $z \in T_v$, and (a) $x = u$ or (b) $x \in T_w$ with
   $w$ a sibling of $v$ (see Fig.~\ref{fig:weighted:lemma});
 \item $\pi_{su}$ is the longest common prefix of
   $\mathcal{P}_{\alpha}(s,t,G)$, otherwise.
 \end{enumerate}
\end{lemma}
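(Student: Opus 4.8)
The plan is to prove the two items as the two directions of a single equivalence: the one‑edge extension $\pi_{su}(u,v)$ is again a common prefix of $\mathcal{P}_{\alpha}(s,t,G)$ if and only if \emph{no} edge $(x,z)$ of the stated form satisfies $d_{G'}(s,x) + w(x,z) + d_{G'}(z,t) \le \alpha$, where $G' = G - (u,v)$. Item~1 is the direction ``no admissible edge $\Rightarrow$ extendable'', and item~2 is the contrapositive of ``extendable $\Rightarrow$ no admissible edge'', so it suffices to establish this equivalence.

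First I would fix the tree geometry (assuming $u\neq t$, else $\pi_{su}$ already ends at $t$). Since $\mathcal{P}_{\alpha}(s,t,G)\neq\emptyset$, the shortest $st$-path, namely the tree path of $T_s$, is itself $\alpha$-bounded and therefore begins with the common prefix $\pi_{su}$. Hence $\pi_{su}$ is exactly the tree path from $s$ to $u$ in $T_s$, the vertex $v$ is the child of $u$ on the tree path toward $t$, the target $t$ lies in the subtree $T_v$, and $(u,v)$ is the \emph{unique} tree edge crossing from $V\setminus T_v$ into $T_v$. I would then reduce extendability to a detour question: because $\pi_{su}$ is already a common prefix, every $\alpha$-bounded $st$-path has prefix $\pi_{su}$, and such a path either continues with $(u,v)$ or, since $u$ occurs only once on a simple path, avoids $(u,v)$ altogether. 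Thus $\pi_{su}(u,v)$ is a common prefix if and only if there is no \emph{detour}, i.e.\ no $\alpha$-bounded $st$-path with prefix $\pi_{su}$ living in $G'$.

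Next I would analyse the structure of a detour. Any such path reaches $t\in T_v$ but cannot use the deleted edge $(u,v)$, so it must enter $T_v$ for the first time along a non-tree edge $(x,z)$ with $x\notin T_v$ and $z\in T_v$. Splitting the detour at this edge and replacing its two pieces by shortest paths in $G'$ shows that its length is at least $d_{G'}(s,x) + w(x,z) + d_{G'}(z,t)$; hence a detour of length $\le\alpha$ forces this quantity to be $\le\alpha$ for some admissible edge, and contrapositively the absence of such an edge rules out every detour, which is item~1. For the reverse implication, needed for item~2, I would run the construction backwards: given an edge $(x,z)$ meeting the bound, I concatenate a shortest $s$-to-$x$ path, the edge $(x,z)$, and a shortest $z$-to-$t$ path; when $x=u$ or $x\in T_w$ with $w$ a sibling of $v$, the initial piece is forced through $\pi_{su}$ and stays in $V\setminus T_v$ while the final piece can be routed inside $T_v$, so the concatenation is a simple $\alpha$-bounded detour, witnessing that $\pi_{su}(u,v)$ is not a common prefix.

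The step I expect to be the main obstacle is pinning down the admissible positions of the re‑entry vertex $x$, i.e.\ justifying that it suffices to consider $x=u$ or $x\in T_w$ for a sibling $w$ of $v$. This is exactly where simplicity and the shortest-path-tree structure must be combined: after traversing $\pi_{su}$ the detour may revisit none of $v_0,\dots,v_{l-1}$, which confines the portion of the walk preceding its first entry into $T_v$ and pins the relevant re‑entry edge to the two cases of Fig.~\ref{fig:weighted:lemma}, with the equality $d_{G'}(s,u)=w(\pi_{su})$ (and, more generally, $d_{G'}(s,x)$ being realised by $\pi_{su}$ followed by the tree path from $u$ to $x$) guaranteeing that the spliced $s$-to-$x$ route genuinely respects the prefix. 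Reconciling this confinement with the freedom of the shortest paths $d_{G'}(s,x)$ and $d_{G'}(z,t)$ — so that the lower bound of one direction and the simple realising path of the other use consistent, prefix-respecting routes, in particular that the optimal $z$-to-$t$ path can be kept inside $T_v$ — is the delicate point of the argument.
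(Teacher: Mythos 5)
The paper states Lemma~\ref{lem:lcp} without giving a proof, so your attempt can only be judged on its own terms. Your overall architecture is sound: reduce extendability of $\pi_{su}$ by $(u,v)$ to the non-existence of an $\alpha$-bounded detour in $G'=G-(u,v)$, lower-bound any detour by $d_{G'}(s,x)+w(x,z)+d_{G'}(z,t)$ via its first edge $(x,z)$ entering $T_v$, and for the converse build a witness and recover item~2 from the fact that the tree path itself is $\alpha$-bounded. You also correctly locate the crux. But the one sentence you offer to resolve that crux is wrong: simplicity of the detour only forbids revisiting $v_0,\dots,v_{l-1}$; it does \emph{not} confine the re-entry vertex $x$ to $\{u\}$ together with the subtrees $T_w$ over siblings $w$ of $v$. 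After leaving $u$ the detour can hop, via non-tree edges, into a subtree $T_{w'}$ hanging off a proper ancestor $v_j$ of $u$ (with $j<l$ and $w'$ a sibling of $v_{j+1}$) without ever touching $v_j$ itself, and first enter $T_v$ from there; such an edge $(x,z)$ satisfies the weight bound but is of neither form (a) nor (b), so as written your argument does not establish item~1.

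The missing ingredient is to use the hypothesis that $\pi_{su}$ is already a common prefix to \emph{exclude} these higher entries. If the first entering edge had $x\in T_{w'}$ with $w'$ a child of $v_j$, $j<l$, consider the walk formed by the tree path from $s$ to $x$, the edge $(x,z)$, and the suffix of the detour from $z$ to $t$: its weight is at most $\alpha$ (because $d_{G'}(s,x)=d_G(s,x)$ is realised by the tree path, and the two detour pieces are only shortened), and it never uses the edge $(v_j,v_{j+1})$. Removing cycles (weights are non-negative) yields a simple $\alpha$-bounded $st$-path that avoids $(v_j,v_{j+1})$ and hence cannot have $\pi_{su}$ as a prefix, contradicting the hypothesis. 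This is the step that actually pins $x$ to $u$ or to a sibling subtree of $v$. Relatedly, for item~2 you try to make the concatenated witness literally simple by routing the $z$-to-$t$ piece inside $T_v$, which need not be possible; the same cycle-removal device makes this unnecessary: the extracted simple path is $\alpha$-bounded and avoids $(u,v)$, so by the common-prefix hypothesis it has prefix $\pi_{su}$ but does not continue with $(u,v)$, which is exactly the witness needed, and since the tree path does continue with $(u,v)$, no one-edge extension of $\pi_{su}$ is a common prefix.
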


In order to use the characterization of Lemma~\ref{lem:lcp} for the
longest prefix of $\mathcal{P}_{\alpha}(s,t,G)$, we need to
efficiently test the weight condition given in item~1, namely
$d_{G'}(s,x) + w(x,z) + d_{G'}(z,t) \leq \alpha$, where $G' = G -
(u,v)$ and $(u,v)$ belongs to the tree path from $s$ to $t$. We have
that $d_{G'}(s,x) = d_{G}(s,x)$, since $x$ does not belong to the
subtree of $v$ in the shortest path tree $T_s$. Indeed, only the
distances of vertices in the subtree $T_v$ can possibly change after
the removal of the tree edge $(u,v)$. However, in principle we have no
guarantee that $d_{G'}(z,t)$ also remains unchanged: recall that to
maintain the distances from $t$ we need a tree rooted at $t$ not at
$s$. Clearly, we cannot compute the shortest path tree from $t$ for
each $G'$; in the worst case, this would imply the computation of $n$
shortest path trees. For this reason, we need
Lemma~\ref{lem:path:dist}.  It states that, in the specific case of
the vertices $z$ we need to compute the distance to $t$ in $G'$, we
have that $d_{G'}(z,t) = d_G(z,t)$.

\begin{lemma} \label{lem:path:dist}
  Let $T_s$ be a shortest path tree rooted at $s$ and $t$ a vertex of
  $G$. Then, for any edge $(u,v)$, with $v$ closer to $t$, in the
  shortest path $\pi_{st}$ in the tree $T_s$, we have that $d_G(z,t) =
  d_{G'}(z,t)$, where $z \in T_v$ and $G' = G - (u,v)$.
\end{lemma}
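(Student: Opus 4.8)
The plan is to prove the two inequalities $d_{G'}(z,t) \ge d_G(z,t)$ and $d_{G'}(z,t) \le d_G(z,t)$ separately. The first is immediate, since $G'$ is a subgraph of $G$ and deleting the edge $(u,v)$ can only keep distances the same or increase them. All the content is in the reverse inequality, which I would establish by exhibiting a shortest $zt$-path in $G$ that does \emph{not} traverse $(u,v)$: such a path also exists in $G'$ and hence witnesses $d_{G'}(z,t) \le d_G(z,t)$, giving equality.

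First I would record two facts about $T_s$. Since $v$ is the endpoint of $(u,v)$ closer to $t$ and $T_s$ is rooted at $s$, both $z$ and $t$ lie in the subtree $T_v$, while $u$ (the parent of $v$) does not. Second, for any $y \in T_v$ the tree path from $v$ to $y$ is a shortest $vy$-path in $G$: because $T_s$ is a shortest path tree, $d_G(s,y)$ equals the length of the $s$-to-$y$ tree path, which decomposes as $d_G(s,v) + \delta$ where $\delta$ is the length of the $v$-to-$y$ tree path; combining this with the triangle inequality $d_G(s,y) \le d_G(s,v) + d_G(v,y)$ and with $d_G(v,y) \le \delta$ forces $d_G(v,y) = \delta$. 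In particular there are shortest $zv$- and $vt$-paths lying entirely inside $T_v$, and these avoid $(u,v)$ because $u \notin T_v$.

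Next I would take any shortest $zt$-path $P$ in $G$, chosen simple (possible as weights are non-negative). If $P$ avoids $(u,v)$ we are done. Otherwise $P$ uses $(u,v)$ exactly once, and I would split on its direction. If $P$ traverses it as $v \to u$, writing $P$ as a $zv$-part, the edge, and a $ut$-part gives $d_G(z,t) = d_G(z,v) + w(u,v) + d_G(u,t)$ by optimal substructure; comparing the detour bound $d_G(z,t) \le d_G(z,v) + d_G(v,t)$ with the triangle inequality $d_G(v,t) \le w(u,v) + d_G(u,t)$ pins down $d_G(v,t) = w(u,v) + d_G(u,t)$, whence $d_G(z,t) = d_G(z,v) + d_G(v,t)$. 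The case where $P$ traverses $(u,v)$ as $u \to v$ is symmetric, using $d_G(z,v) \le d_G(z,u) + w(u,v)$, and yields the same identity. In either case $v$ lies on a shortest $zt$-path, so concatenating the in-tree shortest $zv$- and $vt$-paths from the previous step produces a shortest $zt$-path contained in $T_v$, avoiding $(u,v)$.

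The step I expect to be the main obstacle is exactly this rerouting: the naive instinct to take $P$ and locally patch it around $(u,v)$ fails, because deleting an edge can only raise distances, so the usual subpath/substitution arguments point in the wrong direction. The clean way around it is the observation above that whenever a shortest path does use $(u,v)$, the relevant inequalities are forced to be tight, upgrading the mere detour bound into the exact identity $d_G(z,t) = d_G(z,v) + d_G(v,t)$ and letting the in-$T_v$ tree paths carry the whole argument. I would also dispatch the degenerate cases $z = v$ and $v = t$ directly via the in-tree shortest path, and note that undirectedness of $G$ is what lets me pass freely between $d_G(z,v)$ and $d_G(v,z)$.
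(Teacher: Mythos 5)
Your proof is correct. Note that the paper states Lemma~\ref{lem:path:dist} without proof, so there is no official argument to compare against; your write-up fills that gap soundly. The two pillars of your argument check out: (i) for every $y \in T_v$ the tree path from $v$ to $y$ is a shortest $vy$-path in $G$ and avoids $(u,v)$ because $u \notin T_v$ (and both $z$ and $t$ lie in $T_v$); and (ii) whenever a shortest $zt$-path does traverse $(u,v)$, in either direction, the triangle inequalities you invoke are forced tight, yielding $d_G(z,t) = d_G(z,v) + d_G(v,t)$, so the in-tree detour through $v$ realizes the same length inside $G'$. Combined with the trivial inequality $d_{G'}(z,t) \ge d_G(z,t)$ this gives equality. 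Your closing remark also correctly locates exactly where undirectedness is used --- reversing the in-tree shortest $v$-to-$z$ path into a $z$-to-$v$ path, and traversing $(u,v)$ as $v \to u$ in the first case --- which is consistent with the paper's own comment that both path monotonicity and undirectedness are necessary for this lemma.
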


It is not hard to verify that Lemma~\ref{lem:lcp} is also valid for
directed graphs. However, the non-negative hypothesis for the weights
is necessary; more specifically, we need the monotonicity property for
path weights which states that for any path the weight of any subpath
is not greater than the weight of the full path. Now, in
Lemma~\ref{lem:path:dist} both the path monotonicity property and the
fact that the graph is undirected are necessary. Since these two
lemmas are the basis for the efficiency of Algorithm~\ref{alg:lcp}, it
seems difficult to extend it to directed graphs.

Algorithm~\ref{alg:lcp} implements the strategy suggested by
Lemma~\ref{lem:lcp}. Given a shortest path tree $T_s$ of $G$ rooted at
$s$, the algorithm traverses each vertex $v_i$ in the tree path $s =
v_0 \ldots v_n = t$ from the root $s$ to $t$, and at every step finds
all non-tree edges $(x,z)$ entering the subtree rooted at $v_{i+1}$
from a sibling subtree, \ie a subtree rooted at $w \in N^+(v_i)
\setminus \{v_{i+1}\}$. For each non-tree $(x,z)$ linking the sibling
subtrees found, it checks if it satisfies the weight condition
$d_{G'}(s,x) + w(x,z) + d_{G'}(z,t) \leq \alpha$, where $G' = G -
(v_i,v_{i+1})$. Item~2 of the same lemma implies that the first time
an edge $(x,z)$ satisfies the weight condition, the tree path
traversed so far is the longest common prefix of
$\mathcal{P}_{\alpha}(s,t,G)$. In order to test the weight conditions,
as stated previously, we have that $d_{G'}(s,x) = d_G(s,x)$, since $x$
does not belong to the subtree of $v$ in $T_s$. In addition,
Lemma~\ref{lem:path:dist} guarantees that $d_{G'}(z,t) =
d_G(z,t)$. Thus, it is sufficient for the algorithm to compute only
the shortest path trees from $t$ and from $s$ in $G$.

\begin{algorithm} 
\caption{$\lcp(s, t, \alpha, G)$} \label{alg:lcp}
compute $T_s$, a shortest path tree from $s$ in $G$ \\
compute $T_t$, a shortest path tree from $t$ in $G$ \\
let $\pi_{st} = (s = v_0, v_1) \ldots (v_{n-1}, v_n = t)$ be the shortest path in $T_s$ \\
\For{$v_i \in \{v_1, \ldots, v_n\}$} { \label{alg:lcp:for}
  \For{$w \in N^+(v_i) \setminus \{v_{i+1}\}$}{
    let $T_w$ be the subtree of $T_s$ rooted at $w$ \\
    \For{$(x,z) \in G$ s.t. $x \in T_w$ {\bf or} $x = v_i$}{
      \If{$z \in T_{v_{i+1}}$ {\bf and}  $d_G(s,x) + w(x,z) + d_G(z,t) \leq \alpha$}{ \label{alg:lcp:edges}
        {\bf break} \\
      }
    }
  }
}
{\bf return} $\pi_{sv_{i-1}}$
\end{algorithm}

\begin{theorem} \label{teo:lcp}
  Algorithm~\ref{alg:lcp} finds the longest common prefix of
  $\mathcal{P}_{\alpha}(s,t,G)$ in $O(m+t(n,m))$ time using $O(m)$ space.
\end{theorem}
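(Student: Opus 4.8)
The plan is to prove the three claims of the statement in order: correctness, the $O(m+t(n,m))$ time bound, and the $O(m)$ space bound, with essentially all of the work concentrated in an amortization argument for the running time. For correctness, I would argue that Algorithm~\ref{alg:lcp} is a faithful implementation of the characterization in Lemma~\ref{lem:lcp}. Walking along the tree path $s = v_0, \ldots, v_n = t$ of $T_s$, iteration $i$ examines exactly the candidate ``blocking'' edges $(x,z)$ of Lemma~\ref{lem:lcp} for the extension $(v_i, v_{i+1})$, namely those with $z \in T_{v_{i+1}}$ and either $x = v_i$ or $x$ in a subtree $T_w$ rooted at a sibling $w \in N^+(v_i) \setminus \{v_{i+1}\}$. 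The one point needing justification is that the test actually performed, $d_G(s,x) + w(x,z) + d_G(z,t) \le \alpha$, coincides with the test $d_{G'}(s,x) + w(x,z) + d_{G'}(z,t) \le \alpha$ required by the lemma, where $G' = G - (v_i, v_{i+1})$. This follows from the two facts noted just before the algorithm: $d_{G'}(s,x) = d_G(s,x)$ because $x$ lies outside the subtree $T_{v_{i+1}}$, whose distances from $s$ are the only ones affected by deleting $(v_i,v_{i+1})$, and $d_{G'}(z,t) = d_G(z,t)$ by Lemma~\ref{lem:path:dist}, which applies precisely to the vertices $z \in T_{v_{i+1}}$ we query. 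Since the path is scanned in increasing $i$, the first iteration finding a satisfying edge identifies the longest common prefix by item~2 of Lemma~\ref{lem:lcp}, and the algorithm returns the corresponding prefix of $\pi_{st}$; if no edge ever satisfies the test, item~1 shows that all of $\pi_{st}$ is a common prefix and is returned.

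For the running time, the two shortest path tree computations cost $O(t(n,m))$, so the crux is to bound the three nested loops by $O(m)$ rather than the naive $O(nm)$. Here the key structural observation is that each vertex $x$ is considered as the tail of scanned edges in \emph{exactly one} iteration: in $T_s$ the root-to-$x$ path is unique, so an off-path vertex $x$ belongs to the subtree $T_w$ of a single sibling $w$, determined by the unique tree-path vertex $v_j$ at which the root-to-$x$ path departs from $\pi_{st}$, while an on-path vertex $v_i$ is handled only at iteration $i$ through the case $x = v_i$. The subtrees hanging off distinct tree-path vertices are thus pairwise disjoint, and charging every examined edge $(x,z)$ to its tail $x$ bounds the total number of edge examinations over the whole execution by $\sum_x |N^+(x)| = O(m)$. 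Concretely, I would precompute for each vertex its \emph{deviation index} (the index of the tree-path vertex at which its root path leaves $\pi_{st}$) by a single $O(n)$ traversal of $T_s$, bucket the vertices by this index, and then process the buckets in increasing order, scanning each vertex's out-edges once.

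It then remains to show that each edge examination costs $O(1)$. Using a DFS in/out interval labeling of $T_s$, computed in $O(n)$ time, the subtree-membership test $z \in T_{v_{i+1}}$ reduces to one interval-containment check; the quantities $d_G(s,x)$ and $d_G(z,t)$ are read off the distance labels of $T_s$ and $T_t$, and $w(x,z)$ is stored with the edge, so the weight test is $O(1)$ as well. Summing, the loops cost $O(m+n)$, and since $m \ge n-1$ the whole algorithm runs in $O(m + n + t(n,m)) = O(m + t(n,m))$. For space, the graph occupies $O(m)$ and all auxiliary structures -- the two trees with their distance labels, the interval labels, and the deviation-index buckets -- occupy $O(n)$, giving $O(m)$ in total.

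I expect the amortization argument of the second paragraph to be the main obstacle, since it is exactly what separates this bound from the naive $O(nm)$ and it rests essentially on the disjointness of the sibling subtrees in $T_s$; once that charging is in place, the remaining ingredients are standard bookkeeping built on interval labeling and the distance labels of the two shortest path trees.
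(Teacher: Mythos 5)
Your proposal is correct and follows essentially the same route as the paper's proof: split the cost into the two shortest-path-tree computations ($O(t(n,m))$) plus the nested loops, make each edge examination $O(1)$ via precomputed distances and constant-time subtree membership, and bound the total number of examinations by $O(m)$ using the pairwise disjointness of the sibling subtrees $T_w$ hanging off distinct tree-path vertices. Your additions (the explicit correctness argument via Lemma~\ref{lem:lcp} and Lemma~\ref{lem:path:dist}, and the DFS interval-labeling implementation of the subtree test) merely make explicit what the paper states in the surrounding text or leaves as standard bookkeeping.
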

\begin{proof}
  The cost of the algorithm can be divided in two parts: the cost to
  compute the shortest path trees $T_s$ and $T_t$, and the cost of the
  loop in line~\ref{alg:lcp:for}. The first part is bounded by
  $O(t(n,m))$. Let us now prove that the second part is bounded by
  $O(m+n)$. The cost of each execution of line~\ref{alg:lcp:edges} is
  $O(1)$, since we only need distances from $s$ and $t$ and the
  shortest path trees from $s$ and $t$ are already computed, and we
  pre-process the tree to decide in $O(1)$ if a vertex belongs to a
  subtree.  Hence, the cost of the loop is bounded by the number of
  times line~\ref{alg:lcp:edges} is executed. The neighborhood of each
  vertex $x \in T_w$ is visited exactly once, since for each $w \in
  N^+(v_i) \setminus \{v_{i+1}\}$ and $w' \in N^+(v_j) \setminus
  \{v_{j+1}\}$ the subtrees $T_w$ and $T_{w'}$ are disjoint, where
  $v_i$ and $v_j$ belong to the tree path from $s$ to $t$. \qed
\end{proof}

\section{$K$-shortest and $\alpha$-bounded paths: A unified view} \label{sec:weighted:ordered}

The two main differences between the solutions to the $K$-shortest and
$\alpha$-bounded paths problems are: (i) the order in which the paths
are output and (ii) the space complexity of the algorithms. In this
section, we show that both problems can be placed in a unified
framework such that those differences arise in a natural way. More
precisely, we show that their solutions correspond to two different
traversals of the \emph{same} rooted tree: a Dijkstra-like traversal
for the $K$-shortest and a DFS-like traversal for the $\alpha$-bounded
paths. This tree is a weighted version of the recursion tree of
Algorithm~\ref{alg:simple}, so the height is bounded by $n$ and each
leaf corresponds to an $\alpha$-bounded $st$-path (see
Lemma~\ref{lem:rec:tree}).

The space complexity of the algorithms then follows from the fact
that, in addition to the memory to store the tree, Dijkstra's
algorithm uses memory proportional to the number of nodes, whereas the
DFS uses memory proportional to the height of the tree. In addition,
the order in which the solutions are output is precisely the order in
which the leaves of the tree are visited, a Dijkstra-like traversal
visits the leaves in increasing order of their distance from the root,
whereas a DFS-like traversal visits them in an arbitrary but fixed
order.

We first modify Algorithm~\ref{alg:simple} to obtain an iterative
\emph{generic} variant.  The pseudocode is shown in
Algorithm~\ref{alg:iterative}. Observe that each node in the recursion
tree of Algorithm~\ref{alg:simple} corresponds to some tuple $\langle
u, t, \pi_{ut}, G'\rangle$ in line~\ref{alg:iterative:elem} of
Algorithm~\ref{alg:iterative}. By generic we mean that the container
$Q$ is not specified in the pseudocode, the only requirement is the
support for two operations: \emph{push}, to insert a new element in
$Q$; and \emph{pop}, to remove and return an element of $Q$. It should
be clear now that depending on the container, the algorithm will
perform a different traversal in the underlying recursion tree of
Algorithm~\ref{alg:simple}.

\begin{algorithm}[htbp]
\caption{$\listpathsiter(u,t, \alpha, \pi_{su}, G)$}  \label{alg:iterative}   
push $\langle s, t, \emptyset, G\rangle$ in $Q$ \\
\While{$Q$ is not empty} {
  $\langle u, t, \pi_{su}, G'\rangle = Q.pop()$ \\ \label{alg:iterative:elem}
  \uIf{$u = t$}{ 
    output($\pi_{su}$) \\ 
  }
  \Else {
    compute a shortest path tree $T_t$ from $t$ in $G^R - u$ \\
    \For{$v \in N^+(u)$}{
      \If{$d(v,t) \leq \alpha - w(u,v)$}{ \label{alg:iterative:test}
        push $\langle v,t, \alpha - w(u,v), \pi_{su} \cdot (u,v), G' - u \rangle$ in $Q$ 
      } 
    }
  }
}
\end{algorithm}

Algorithm~\ref{alg:iterative} uses the same strategy to partition the
solution space (Eq.~\ref{eq:path:partition}). Of course, the order in
which the partitions are explored depends on the type of container
used for $Q$. We show that if $Q$ is a stack, then the solutions are
output in the reverse order of Algorithm~\ref{alg:simple} and the
maximum size of the stack is linear in the size of the input.  If on
the other hand, $Q$ is a priority queue, using a suitable key, the
solutions are output in increasing order of their lengths, but in this
case the maximum size of the priority queue is linear in the number of
solutions, which is not polynomial in the size of the input.

Let $\mathcal{T}$ be the recursion tree of Algorithm~\ref{alg:simple}
(see Lemma~\ref{lem:rec:tree}). In Algorithm~\ref{alg:iterative}, each
element $\langle u, t, \pi_{su}, G'\rangle$ corresponds to the
arguments of a call of Algorithm~\ref{alg:simple}, \ie a node of
$\mathcal{T}$. For any container $Q$ supporting push and pop
operations, Algorithm~\ref{alg:iterative} visits each node of
$\mathcal{T}$ exactly once, since at every iteration a node from $Q$
is deleted and its children are inserted in $Q$, and $\mathcal{T}$ is
a tree.  In particular, this guarantees that every leaf of
$\mathcal{T}$ is visited exactly once, thus proving the following
lemma.

\begin{lemma}
  Algorithm~\ref{alg:iterative} outputs all $\alpha$-bounded
  $st$-paths.
\end{lemma}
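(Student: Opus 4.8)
The plan is to reduce the statement to Lemma~\ref{lem:rec:tree} by showing that Algorithm~\ref{alg:iterative} performs a complete traversal of the recursion tree $\mathcal{T}$ of Algorithm~\ref{alg:simple}, independently of the container discipline chosen for $Q$. First I would make explicit the correspondence already sketched in the preceding paragraph: each tuple $\langle u,t,\pi_{su},G'\rangle$ pushed onto $Q$ is exactly the argument list of some call of Algorithm~\ref{alg:simple}, hence a node of $\mathcal{T}$ (line~\ref{alg:iterative:elem}); and the tuples pushed in the inner \textbf{for} loop after popping a node $r$ are precisely the children of $r$ in $\mathcal{T}$, since the guard on line~\ref{alg:iterative:test} coincides with the guard on line~\ref{alg:simple:test} that determines which children exist.

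Next I would establish the key invariant by induction on the number of iterations of the \textbf{while} loop: at the start of each iteration, $Q$ contains exactly those nodes of $\mathcal{T}$ whose parent has already been popped but which have not themselves been popped (treating the root as having a virtual, already-popped parent). Initially $Q$ holds only the root, so the invariant holds. At each step we pop one node $r$ and push all of its children; since $\mathcal{T}$ is a tree, every node has a unique parent, so no node is ever pushed twice and the invariant is preserved. Because the height of $\mathcal{T}$ is bounded by $n$ (Lemma~\ref{lem:rec:tree}, item~3) and every node has finite out-degree, $\mathcal{T}$ is finite; hence the loop terminates and, by the invariant, every node of $\mathcal{T}$ is popped exactly once. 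This conclusion is insensitive to whether $Q$ is a stack, a queue, or a priority queue: the \emph{order} of visits depends on the container, but the \emph{set} of visited nodes does not.

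Finally I would invoke Lemma~\ref{lem:rec:tree}, item~1: the leaves of $\mathcal{T}$ are in one-to-one correspondence with the paths in $\mathcal{P}_{\alpha}(s,t,G)$. A node is a leaf exactly when $u = t$, which is precisely the branch in which Algorithm~\ref{alg:iterative} produces output. Since every node, and in particular every leaf, is popped exactly once, each $\alpha$-bounded $st$-path is output exactly once and nothing else is output, which is the claim.

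The only delicate point is phrasing the invariant for an \emph{arbitrary} container, so I expect the main obstacle to be stating it cleanly enough that the ``visited exactly once'' conclusion is manifestly independent of the push/pop policy. Once the tree structure is in place (unique parents, plus finiteness from the height bound and finite branching), the remainder is routine bookkeeping.
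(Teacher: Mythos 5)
Your proposal is correct and follows essentially the same route as the paper: identify each tuple with a node of the recursion tree $\mathcal{T}$, argue that every node is popped exactly once regardless of the container discipline, and conclude via the leaf--path correspondence of Lemma~\ref{lem:rec:tree}. Your explicit invariant and termination argument merely formalize what the paper states in one sentence.
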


Let us consider the case where $Q$ is a stack. It is not hard to prove
that Algorithm~\ref{alg:simple} is a DFS traversal of $\mathcal{T}$
starting from the root, while Algorithm~\ref{alg:iterative} is an
\emph{iterative} DFS~\cite{Sedgewick01} traversal of $\mathcal{T}$
also starting from the root. Basically, an iterative DFS keeps the
vertices of the fringe of the non-visited subgraph in a stack, at each
iteration the next vertex to be explored is popped from the stack, and
recursive calls are replaced by pushing vertices in the stack. Now,
for a fixed permutation of the children of each node in $\mathcal{T}$,
the nodes visited in an iterative DFS traversal are in the reverse
order of the nodes visited in a recursive DFS traversal, thus proving
Lemma~\ref{lem:weighted:dfs_order}.

\begin{lemma} \label{lem:weighted:dfs_order}
  If $Q$ is a stack, then Algorithm~\ref{alg:iterative} outputs the
  $\alpha$-bounded $st$-path in the reverse order of
  Algorithm~\ref{alg:simple}.
\end{lemma}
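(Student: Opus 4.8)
The plan is to argue entirely at the level of the recursion tree $\mathcal{T}$ of Algorithm~\ref{alg:simple} (Lemma~\ref{lem:rec:tree}), reducing the statement to a clean combinatorial fact about DFS traversals. First I would fix, once and for all, a single ordering of the children of every node of $\mathcal{T}$, namely the order in which the loop over the out-neighbours $N^+(u)$ enumerates them; both algorithms apply the same eligibility test $d(v,t) \le \alpha - w(u,v)$, so they explore the identical tree $\mathcal{T}$ and process children identically. With this convention, Algorithm~\ref{alg:simple} is exactly a recursive DFS of $\mathcal{T}$ from the root, while Algorithm~\ref{alg:iterative} with a stack $Q$ is the corresponding iterative DFS: at each step it pops the top node and replaces the recursive calls by pushing the eligible children onto $Q$. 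Since outputs occur only at leaves (an internal node emits nothing), the output sequence of each algorithm is precisely the subsequence of leaves in its visitation order.

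Next I would isolate the single structural observation behind the reversal. For a node with children $c_1, \ldots, c_k$ in the fixed order and corresponding subtrees $\mathcal{T}_1, \ldots, \mathcal{T}_k$, the recursive DFS visits the leaves in the concatenated order $\mathrm{rec}(\mathcal{T}_1)\cdots\mathrm{rec}(\mathcal{T}_k)$. For the stack traversal, after popping the parent we push $c_1, \ldots, c_k$ so that $c_k$ sits on top; the point to verify is that every node pushed while exploring $\mathcal{T}_k$ is a descendant of $c_k$ and is placed above the still-waiting siblings $c_1, \ldots, c_{k-1}$. Hence all of $\mathcal{T}_k$ is processed before $c_{k-1}$ is ever popped, and its nodes are processed exactly in the order of the iterative DFS run on $\mathcal{T}_k$ in isolation (the dormant siblings at the bottom of the stack cannot interfere). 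This yields $\mathrm{iter}(\mathcal{T}) = \mathrm{iter}(\mathcal{T}_k)\,\mathrm{iter}(\mathcal{T}_{k-1})\cdots\mathrm{iter}(\mathcal{T}_1)$.

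With these two identities in hand, I would finish by induction on the height of $\mathcal{T}$, proving the invariant $\mathrm{iter}(\mathcal{T}) = \mathrm{reverse}(\mathrm{rec}(\mathcal{T}))$ on leaf sequences. The base case is a single leaf, where both sequences equal the singleton. For the inductive step the hypothesis gives $\mathrm{iter}(\mathcal{T}_i) = \mathrm{reverse}(\mathrm{rec}(\mathcal{T}_i))$ for each child subtree; substituting into the stack identity and using that reversal turns a concatenation $A_1\cdots A_k$ into $\mathrm{reverse}(A_k)\cdots\mathrm{reverse}(A_1)$ shows
\[
\mathrm{iter}(\mathcal{T}) = \mathrm{reverse}(\mathrm{rec}(\mathcal{T}_k))\cdots\mathrm{reverse}(\mathrm{rec}(\mathcal{T}_1)) = \mathrm{reverse}\bigl(\mathrm{rec}(\mathcal{T}_1)\cdots\mathrm{rec}(\mathcal{T}_k)\bigr) = \mathrm{reverse}(\mathrm{rec}(\mathcal{T})).
\]
Applying this to the root and invoking the leaf--solution bijection of Lemma~\ref{lem:rec:tree} gives the claim.

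I expect the main obstacle to be the second step: rigorously justifying that the stack traversal \emph{localizes} to each subtree, i.e. that exploration of $\mathcal{T}_k$ completes before any of $c_1, \ldots, c_{k-1}$ is popped and proceeds as an independent iterative DFS on $\mathcal{T}_k$. The cleanest way to nail this down is a stack invariant: at every moment the stack, read bottom-to-top, is the concatenation of the pending sibling lists along the current root-to-frontier path, so the bottom block $c_1, \ldots, c_{k-1}$ stays untouched until everything above it has been consumed. Everything else --- the concatenation behaviour of recursive DFS and the algebra of reversal --- is routine.
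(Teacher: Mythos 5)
Your proof is correct and follows essentially the same route as the paper: identify Algorithm~\ref{alg:simple} as a recursive DFS and Algorithm~\ref{alg:iterative} with a stack as an iterative DFS of the same recursion tree $\mathcal{T}$ under a fixed ordering of each node's children, then invoke the fact that the two traversals produce reversed leaf sequences. The paper merely asserts this last combinatorial fact (``it is not hard to prove''), whereas you supply the induction with the stack-localization invariant, and you correctly restrict the reversal claim to the \emph{leaf} sequences --- slightly more precise than the paper's blanket statement about all nodes, since the root is visited first in both traversals.
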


For any rooted tree, at any moment during an iterative DFS traversal,
the number of nodes in the stack is bounded by the sum of the degrees
of the root-to-leaf path currently being explored. Recall that every
leaf in $\mathcal{T}$ corresponds to a path in
$\mathcal{P}_{\alpha}(s,t,G)$. Actually, there is a one-to-one
correspondence between the nodes of a root-to-leaf path $P$ in
$\mathcal{T}$ and the vertices of the $\alpha$-bounded $st$-path $\pi$
associated to that leaf. Hence, the sum of the degrees of the nodes of
$P$ in $\mathcal{T}$ is equal to the sum of the degrees of the
vertices $\pi$ in $G$, which is bounded by $m$, thus proving
Lemma~\ref{lem:weighted:dfs_size}. 

\begin{lemma} \label{lem:weighted:dfs_size}
  The maximum number of elements in the stack of
  Algorithm~\ref{alg:iterative} over all iterations is bounded by $m$.
\end{lemma}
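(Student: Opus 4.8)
The plan is to combine a generic fact about iterative depth-first search with the specific structure of the recursion tree $\mathcal{T}$ from Lemma~\ref{lem:rec:tree}. First I would isolate the tree-agnostic claim quoted just before the lemma: during an iterative DFS of any rooted tree, at every moment the stack is contained in the union of the children-sets of the nodes lying on the path from the root to the node currently being explored. Granting this, I would then translate the resulting bound -- the sum of the degrees (numbers of children) along a root-to-leaf path of $\mathcal{T}$ -- into a quantity on $G$ via the correspondence of Lemma~\ref{lem:rec:tree}, and finally bound that quantity by $m$.

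The first step is to make the iterative-DFS invariant precise, which I would do by induction on the number of iterations. Initially the stack holds only the root. I maintain the invariant: if $c$ denotes the node most recently popped (the one currently being processed), then every node still in the stack is a child in $\mathcal{T}$ of some node on the root-to-$c$ path, $c$ included. The push/pop step preserves this, since popping $c$ pushes exactly its children -- children of the endpoint $c$ of the new active path -- while the nodes already present were children of ancestors of $c$ and remain so. Because the children-sets of distinct tree nodes are pairwise disjoint, the stack size is at most $\sum_{p} (\text{number of children of } p)$ over the nodes $p$ of this root-to-$c$ path.

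Next I would convert degrees in $\mathcal{T}$ into out-degrees in $G$. By construction, a node $\langle u,t,\alpha,\pi_{su},G'\rangle$ of $\mathcal{T}$ has one child per out-neighbor $v \in N^+(u)$ passing the test of line~\ref{alg:iterative:test}, so its number of children is at most $|N^+(u)| = \deg^+_G(u)$ (the test graph $G'$ is a subgraph of $G$, so it only shrinks the neighborhood). By Lemma~\ref{lem:rec:tree}, each root-to-leaf path $P$ of $\mathcal{T}$ corresponds to an $\alpha$-bounded $st$-path $\pi = (s = u_0,\ldots,u_\ell = t)$, and the ``current vertices'' of the nodes of $P$ are precisely $u_0,\ldots,u_\ell$. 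Since $\pi$ is a simple path these vertices are pairwise distinct, so the stack size is at most $\sum_{i=0}^{\ell} \deg^+_G(u_i) \leq \sum_{v \in V} \deg^+_G(v) = m$, which is exactly the claimed bound (the sum of out-degrees over all vertices equals the number of arcs).

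The hard part is not the final counting, which is routine, but making the iterative-DFS stack invariant rigorous: the phrase ``the root-to-leaf path currently being explored'' is ambiguous at intermediate moments, so the real work is to pin down exactly which nodes the stack may hold and to verify that their parents all lie on a single root-to-current-node path, which is what licenses the disjointness argument that gives the sum-of-degrees bound.
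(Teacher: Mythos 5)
Your proof is correct and follows essentially the same route as the paper: the generic iterative-DFS fact that the stack is bounded by the sum of the children-counts along the root-to-current-node path, combined with the correspondence (Lemma~\ref{lem:rec:tree}) between that path's nodes and the distinct vertices of a simple $st$-path in $G$, whose out-degrees sum to at most $m$. You merely supply the inductive justification of the DFS stack invariant that the paper states without proof, which is a welcome tightening rather than a different argument.
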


Let us consider now the case where $Q$ is a priority queue. There is a
one-to-many correspondence between arcs in $G$ and arcs in
$\mathcal{T}$, \ie if $\mathcal{P}_{\alpha''}(v,t,G'')$ is a child of
$\mathcal{P}_{\alpha'}(u,t,G')$ in $\mathcal{T}$ then $(u,v)$ is an
arc of $G$. For every arc of $\mathcal{T}$, we give the weight of the
corresponding arc in $G$. Now, Algorithm~\ref{alg:iterative} using a
priority queue with $w(\pi_{su}) + d_G(u,t)$ as keys performs a
Dijkstra-like traversal in this weighted version of $\mathcal{T}$
starting from the root. Indeed, for a node $\langle u, t, \pi_{su},
G\rangle$ the distance from the root is $w(\pi_{su})$, and $d_G(u,t)$
is a (precise) estimation of the distance from $\langle u, t,
\pi_{su}, G\rangle$ to the closest leaf of $\mathcal{T}$. In other
words, it is an $A^*$ traversal~\cite{Dechter85} in the weighted
rooted tree $\mathcal{T}$, using the (optimal) heuristic
$d_G(u,t)$. As such, Algorithm~\ref{alg:iterative} explores first the
nodes of $\mathcal{T}$ leading to the cheapest non-visited leaf. This
is formally stated in Lemma~\ref{lem:heap_order}.

\begin{lemma} \label{lem:heap_order}
  If $Q$ is a priority queue with $w(\pi_{su}) + d_{G'}(u,t)$ as the
  priority key of $\langle u, t, \pi_{su}, G'\rangle$, then
  Algorithm~\ref{alg:iterative} outputs the $\alpha$-bounded
  $st$-paths in increasing order of their lengths.
\end{lemma}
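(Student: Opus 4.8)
The plan is to read the priority key $f(\langle u,t,\pi_{su},G'\rangle) = w(\pi_{su}) + d_{G'}(u,t)$ as the sum of a ``cost so far'' $g = w(\pi_{su})$ and an ``estimated cost to go'' $h = d_{G'}(u,t)$, so that Algorithm~\ref{alg:iterative} with a priority queue is exactly the $A^*$/best-first traversal of $\mathcal{T}$ anticipated in the text. With this reading the lemma reduces to two facts: (i) the key of a leaf equals the length of its associated path, and (ii) the keys of the nodes \emph{popped} from $Q$ form a non-decreasing sequence. Fact (i) is immediate, since a leaf is a node with $u=t$, whence $d_{G'}(t,t)=0$ and $f = w(\pi_{st})$, the length of the output $st$-path. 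Given (ii), the subsequence of popped leaves has non-decreasing keys, hence non-decreasing lengths, which is exactly the claim; so the whole argument rests on establishing (ii).

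The key step towards (ii) is that $f$ is monotone (consistent) along the edges of $\mathcal{T}$: for a parent node $p=\langle u,t,\pi_{su},G'\rangle$ and its child $c=\langle v,t,\pi_{su}\cdot(u,v),G'-u\rangle$ I would show $f(c) \geq f(p)$, that is
\[
 d_{G'}(u,t) \;\leq\; w(u,v) + d_{G'-u}(v,t).
\]
This follows from two elementary observations. First, since $G'-u$ is a subgraph of $G'$, every $v$--$t$ path of $G'-u$ is also a $v$--$t$ path of $G'$, hence $d_{G'}(v,t) \leq d_{G'-u}(v,t)$. Second, concatenating the arc $(u,v) \in G'$ with a shortest $v$--$t$ path in $G'$ yields the triangle-type inequality $d_{G'}(u,t) \leq w(u,v) + d_{G'}(v,t)$. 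Chaining the two gives the displayed inequality, so $f(c) - f(p) = w(u,v) + d_{G'-u}(v,t) - d_{G'}(u,t) \geq 0$. I expect the main difficulty to lie precisely here, and this is also where the hypotheses enter: the triangle inequality needs shortest paths to be well defined (guaranteed by the non-negativity of the weights), and the subgraph comparison uses that deleting a vertex can only lengthen distances.

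With monotonicity in hand, (ii) follows by induction on the pop steps. At each iteration the popped node $a$ has the minimum key in the current $Q$; after removing $a$ we push its children, each of key $\geq f(a)$ by monotonicity, while every element already present in $Q$ also had key $\geq f(a)$ because $a$ was the minimum. Hence the minimum key of the updated queue is $\geq f(a)$, so the next popped key is $\geq f(a)$. Since $\mathcal{T}$ is a tree, each node is pushed and popped exactly once (as already observed for Algorithm~\ref{alg:iterative}), so these local comparisons assemble into a globally non-decreasing sequence of popped keys. Combining (i) and (ii) proves that the $\alpha$-bounded $st$-paths are output in non-decreasing order of their lengths.
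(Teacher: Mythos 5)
Your proof is correct and follows essentially the same route as the paper: the paper justifies the lemma by observing that the algorithm is an $A^*$ (best-first) traversal of the weighted recursion tree $\mathcal{T}$ with the heuristic $d_{G'}(u,t)$, and you simply make explicit the two standard ingredients it delegates to the $A^*$ literature, namely that the key is consistent along tree edges ($d_{G'}(u,t) \leq w(u,v) + d_{G'-u}(v,t)$, which does require the non-negativity of the weights, as you note) and that consistency forces the popped keys to be non-decreasing. Your version is a self-contained formalization of the paper's argument rather than a different one.
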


For any choice of the container $Q$, each node of $\mathcal{T}$ is
visited exactly once, that is, each node of $\mathcal{T}$ is pushed at
most once in $Q$. This proves Lemma~\ref{lem:heap_size}.

\begin{lemma} \label{lem:heap_size}
  The maximum number of elements in a priority queue of
  Algorithm~\ref{alg:iterative} over all iterations is bounded by
  $\gamma$.
\end{lemma}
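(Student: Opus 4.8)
The plan is to bound the maximum number of elements \emph{simultaneously} present in $Q$, and the natural quantity to exploit is the structure of the underlying tree $\mathcal{T}$ rather than the priority order itself. The first step I would take is to establish a structural invariant of the traversal: at every iteration, the set of nodes currently stored in $Q$ forms an \emph{antichain} of $\mathcal{T}$, that is, no node in $Q$ is an ancestor of another node in $Q$. This holds for \emph{any} container supporting push and pop, since a node is inserted into $Q$ only at the moment its parent is popped from $Q$. Consequently, if a node $b$ is in $Q$, then its parent has already been popped, which in turn forces the parent's parent to have been popped, and so on up the tree; hence no strict ancestor of $b$ can still reside in $Q$ while $b$ does.

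Second, I would observe that every node pushed into $Q$ has at least one leaf in its subtree. This is exactly the role of the test in line~\ref{alg:iterative:test}: a child $\langle v,t,\ldots \rangle$ is pushed only when $d(v,t) \leq \alpha - w(u,v)$, \ie when $\mathcal{P}_{\alpha'}(v,t,G')$ is non-empty. By Lemma~\ref{lem:rec:tree}(2) the leaves of the subtree rooted at such a node are in bijection with this non-empty set, so the subtree contains at least one leaf.

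Combining the two observations yields the bound. Since the nodes of $Q$ form an antichain, their subtrees in $\mathcal{T}$ are pairwise disjoint, and each such subtree contains at least one leaf. Choosing one leaf per node in $Q$ therefore defines an injection from the elements of $Q$ into the set of leaves of $\mathcal{T}$. By Lemma~\ref{lem:rec:tree}(1) the leaves are in one-to-one correspondence with $\mathcal{P}_{\alpha}(s,t,G)$, whose cardinality is $\gamma$, so that $Q$ holds at most $\gamma$ elements at every iteration, which is the claim.

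The main obstacle I anticipate is justifying the antichain invariant cleanly: one must argue carefully that a node and a proper ancestor of it cannot coexist in $Q$, tracing the ``parent popped before child pushed'' dependency up the entire ancestor chain. Note that the one-line justification preceding the lemma in the text -- that each node of $\mathcal{T}$ is pushed at most once -- only bounds the \emph{total} number of pushes by the size of $\mathcal{T}$, not the maximum simultaneous occupancy by $\gamma$; the antichain argument is what actually delivers the stated bound. Everything else is routine, and in particular this reasoning is independent of the priority key, so the same bound holds verbatim for any choice of container.
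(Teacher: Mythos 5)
Your proof is correct, and it is in fact more complete than the one the paper gives. The paper's entire justification is the sentence preceding the lemma: each node of $\mathcal{T}$ is pushed at most once, ``this proves'' the lemma. As you rightly point out, that observation only bounds the number of elements ever pushed by the number of nodes of $\mathcal{T}$, which can exceed the number of leaves $\gamma$ (internal nodes on long root-to-leaf chains are not accounted for), so it does not by itself deliver the stated bound on simultaneous occupancy. Your two ingredients close that gap: (i) the antichain invariant --- a node enters $Q$ only when its parent leaves $Q$, and since no node is ever re-pushed, all strict ancestors of any element of $Q$ have already been popped, so the contents of $Q$ are pairwise incomparable in $\mathcal{T}$; and (ii) the test in line~\ref{alg:iterative:test} guarantees that every pushed node has a non-empty solution subspace, hence by Lemma~\ref{lem:rec:tree} at least one leaf in its subtree. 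Disjointness of the subtrees of an antichain then gives an injection from $Q$ into the leaves, i.e.\ into $\mathcal{P}_{\alpha}(s,t,G)$. The only nit is the degenerate case $\gamma = 0$, where the root is still pushed and $Q$ momentarily holds one element; the bound should strictly be $\max(1,\gamma)$, but this is immaterial. Your remark that the argument is container-independent is also consistent with the paper: Lemma~\ref{lem:weighted:dfs_size} is just a finer estimate ($m$ instead of $\gamma$) obtained for the stack by summing degrees along the current root-to-leaf path.
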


Algorithm~\ref{alg:iterative} uses $O(m\gamma)$ space since for every
node inserted in the priority queue, we also have to store the
corresponding graph. Moreover, using a binary heap as a priority
queue, the push and pop operations can be performed in $O(\log
\gamma)$ each, where $\gamma$ is the maximum size of the
heap. Therefore, combining this with Lemma~\ref{lem:heap_order}, we
obtain the following theorem.

\begin{theorem}
  Algorithm~\ref{alg:iterative} using a binary heap outputs all
  $\alpha$-bounded $st$-paths in increasing order of their lengths in
  $O((nt(n,m) + \log \gamma)\gamma)$ total time, using $O(m\gamma)$
  space.
\end{theorem}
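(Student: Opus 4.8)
The plan is to read the theorem as three separate assertions and dispatch them in turn, since two of them are already in hand. Correctness --- that every $\alpha$-bounded $st$-path is output exactly once --- is the earlier lemma on Algorithm~\ref{alg:iterative}, which holds for any container $Q$ and hence in particular for a binary heap. The output order is precisely Lemma~\ref{lem:heap_order}, instantiated with the priority key $w(\pi_{su}) + d_{G'}(u,t)$. Thus the substance of the proof is the resource analysis, and I would organise it entirely around the recursion tree $\mathcal{T}$ of Algorithm~\ref{alg:simple}, whose structure is recorded in Lemma~\ref{lem:rec:tree} and whose nodes are in bijection with the elements ever pushed into $Q$.

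For the space bound I would invoke Lemma~\ref{lem:heap_size}: at every iteration the heap stores at most $\gamma$ elements. Each element $\langle u,t,\pi_{su},G'\rangle$ must carry its associated graph $G'$, costing $O(m)$, while the heap itself needs only $O(1)$ bookkeeping per element; hence the total space is $O(m\gamma)$.

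For the time bound I would first count the nodes of $\mathcal{T}$. By Lemma~\ref{lem:rec:tree} the tree has exactly $\gamma$ leaves and height at most $n$, and the emptiness test on line~\ref{alg:iterative:test} ensures that every node is an ancestor of at least one leaf, so every node lies on a root-to-leaf path of at most $n+1$ nodes. Charging each node to one leaf of its subtree gives at most $(n+1)\gamma = O(n\gamma)$ nodes. Since Algorithm~\ref{alg:iterative} pushes and pops each node of $\mathcal{T}$ exactly once, this simultaneously bounds the number of shortest-path-tree computations and the number of heap operations by $O(n\gamma)$. Each internal node spends $O(t(n,m))$ on its shortest path computation in $G^R - u$ (the loop over $N^+(u)$ being absorbed into this term), and each push or pop on a binary heap of size at most $\gamma$ costs $O(\log\gamma)$. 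Summing yields $O(n\,t(n,m)\,\gamma)$ for the shortest path work and $O(n\gamma\log\gamma)$ for the heap, which I would then reconcile with the stated bound.

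The step I expect to be delicate is exactly this final matching of terms. The direct count charges one heap operation to each of the $O(n\gamma)$ tree nodes, giving a heap contribution of $O(n\gamma\log\gamma)$ --- a factor $n$ larger on the logarithmic term than the $\gamma\log\gamma$ written in the statement. Since $\log\gamma = O(n\log n)$ can exceed $t(n,m)$ (for example on sparse unweighted graphs with exponentially many paths), this term is not automatically absorbed into $O(nt(n,m)\gamma)$, so reaching the precise form $O((nt(n,m)+\log\gamma)\gamma)$ requires either a sharper bound on how many nodes actually enter the queue or an explicit argument that the logarithmic term is lower order in the regime where the algorithm is used. Everything else --- the node count and the two per-node costs --- follows mechanically from Lemmas~\ref{lem:rec:tree}, \ref{lem:heap_order} and \ref{lem:heap_size}, so this reconciliation is the one point that needs genuine care; the remainder is routine assembly.
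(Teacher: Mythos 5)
Your decomposition is exactly the paper's: the paper proves this theorem in the three sentences preceding its statement, citing Lemma~\ref{lem:heap_order} for the order, Lemma~\ref{lem:heap_size} (each heap element carries an $O(m)$-size graph, and at most $\gamma$ elements coexist) for the $O(m\gamma)$ space, and the observation that each push/pop on a heap of maximum size $\gamma$ costs $O(\log\gamma)$. Your accounting of the shortest-path work --- $O(n\gamma)$ tree nodes by Lemma~\ref{lem:rec:tree}, each charged $O(t(n,m))$ --- is the intended (but unstated) source of the $nt(n,m)\gamma$ term. So the approach matches; if anything you are more explicit than the paper.

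The one point you flag as delicate is a real imprecision, but it is in the paper, not in your argument, and it closes without any sharper count of queue insertions. There are indeed $\Theta(|\mathcal{T}|)=O(n\gamma)$ heap operations, giving a heap contribution of $O(n\gamma\log\gamma)$ rather than $O(\gamma\log\gamma)$. However, $\gamma$ is at most the number of simple $st$-paths, so $\gamma\leq n!$ and $\log\gamma=O(n\log n)$. In the weighted setting for which this theorem is used, $t(n,m)=m+n\log n=\Omega(n\log n)$, hence $\log\gamma=O(t(n,m))$ and $n\gamma\log\gamma=O(nt(n,m)\gamma)$: the heap cost is absorbed into the dominant term and the stated bound holds (the explicit $+\log\gamma$ is then essentially redundant). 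Your counterexample regime --- sparse unweighted graphs with BFS, where $t(n,m)=O(m)=O(n)$ while $\log\gamma=\Theta(n\log n)$ --- does lose a $\log n$ factor against the stated form, but that regime is outside what the theorem is invoked for. Adding the one-line observation $\log\gamma=O(n\log n)=O(t(n,m))$ would complete your proof; the paper omits even the $O(n\gamma)$ operation count that makes the issue visible.
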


{\footnotesize
\bibliographystyle{plain}

}

\end{document}